\begin{document}

\bibliographystyle{plain}

\def\R{\mathbb{R}}
\def\N{\mathbb{N}}
\def\true{\mbox{\sc true}}
\def\false{\mbox{\sc false}}
\def\geq{\geqslant}
\def\leq{\leqslant}
\def\eps{\varepsilon}
\def\A{\mathcal{A}}
\def\C{\mathcal{C}}
\def\d{{\mathrm d}}
\def\un{{\mathbf{1}}}

\newtheorem{theorem}{Theorem}
\newtheorem{lemma}[theorem]{Lemma}
\newtheorem{proposition}[theorem]{Proposition}
\newtheorem{corollary}[theorem]{Corollary}
\newtheorem{question}{Question}

\def\boxit#1{\vbox{\hrule\hbox{\vrule\kern3pt
  \vbox{\kern3pt#1\kern3pt}\kern3pt\vrule}\hrule}}
\def\Box{\boxit{\null}}
\newcommand{\qed}{~$\Box$\medbreak}
\newenvironment{proof}{\noindent{\bf Proof: }}{\qed}

\title{A deterministic algorithm for fitting a step function to a weighted
point-set}

\author{Herv\'e Fournier\thanks{Univ Paris Diderot, Sorbonne Paris Cit\'e,
Institut de Math\'ematiques de Jussieu, UMR 7586 CNRS, F-75205 Paris, France.
Email: \texttt{fournier@math.univ-paris-diderot.fr}.
}
\and Antoine Vigneron\thanks{King Abdullah University of Science and
Technology (KAUST),
Geometric Modeling and Scientific Visualization 
Center, Thuwal 23955-6900. Saudi Arabia.
Email: \texttt{antoine.vigneron@kaust.edu.sa}.
}
}

\date{\today}

\maketitle

\begin{abstract}
Given a set of $n$ points in the plane, each point having a positive weight,
and an integer $k>0$, we present an optimal $O(n \log n)$-time deterministic algorithm
to compute a step function with $k$ steps that minimizes the maximum
weighted vertical distance to the input points.
It matches the expected time bound of the best known randomized algorithm
for this problem.
Our approach relies on Cole's improved parametric searching technique.
As a direct application, our result yields the first $O(n \log n)$-time
algorithm for computing a $k$-center of a set of $n$ points on the
real line.
\end{abstract}

\section{Problem formulation and previous works}

A function $f: \R \rightarrow \R$ is called
a $k$-\emph{step function} if there exists a real sequence
$a_1<\dots <a_{k-1}$ such that the restriction
of $f$ to each of the intervals $(-\infty,a_1)$, $[a_i,a_{i+1})$
and $[a_{k-1},+\infty)$ is a constant.
A weighted point in the plane is a triplet $p=(x,y,w) \in \R^3$
where $(x,y) \in \R^2$ represents the coordinates of $p$
and $w > 0$ is a weight associated with $p$.
We use $\d(p,f)$ to denote the weighted vertical distance 
$w \cdot |f(x)-y|$ between $p$ and $f$.
For a set of weighted points $P$,
we define the distance $\d(P,f)$ between $P$
and a step function $f$ as:
\[\d(P,f)=\max\{\d(p,f) \mid p \in P\}.\] 
Given the point-set $P$, 
our goal is to find a $k$-step function $f$ that minimizes $d(P,f)$.

This histogram construction problem is motivated by databases
applications, where one wants to find a compact representation
of the dataset that fits into main memory, so as to optimize
query processing~\cite{GuhaS07}. The unweighted version,
where $w_i=1$ for all $i$, has been studied extensively, until optimal
algorithms were found. (See our previous article~\cite{FournierV11}
and references therein.)

The weighted case was first 
considered by Guha and Shim~\cite{GuhaS07}, who gave an
$O(n\log n+k^2 \log^6 n)$-time algorithm. 
Lopez and Mayster~\cite{LopezM08} 
gave an $O(n^2)$-time algorithm, which is thus faster
for small values of $k$. 
Then Fournier and Vigneron~\cite{FournierV11} 
gave an $O(n \log^4 n)$ algorithm, which was
further improved to $O(\min(n \log^ 2 n, n\log n+ k^2 \log² \frac n k
\log n \log \log n))$ by Chen and Wang~\cite{ChenW09}.
Eventually, an optimal randomized $O(n \log n)$-time 
algorithm was obtained by Liu~\cite{Liu10}. 
In this note,
we present a deterministic counterpart to Liu's algorithm, which runs
in $O(n \log n)$ time. This time bound is optimal as the unweighted
case already requires $\Omega(n \log n)$ time~\cite{FournierV11}. 
Our approach combines ideas from
previous work on this problem~\cite{GuhaS07,KarrasMS07} with
the improved parametric searching technique by Cole~\cite{Cole87}.

Our result has a direct application to the $k$-center problem
on the real line: Given a set of $n$ points $r_1 < \dots < r_n \in \R$,
with weights $w_1,\dots, w_n$,
the goal is to find a set of $k$ centers $c_1,\dots,c_k \in \R$ 
that minimizes the maximum over $i$ of the  weighted distance 
$w_i \d(r_i,\{c_1,\ldots,c_k\})$.
Given such an instance of the weighted $k$-center
problem, we construct an instance of our step-function 
approximation problem where the input points are $p_i=(i,r_i)$  
for $i=1,\dots,n$,  keeping the same weights $w_1,\dots,w_n$. 
Then these two problems are equivalent: The $y$-coordinates
of the $k$ steps of an optimal step-function give an optimal
set of $k$ centers. So our algorithm also solves the weighted
$k$-center problem on the line in $O(n\log n)$ time, improving
on a recent result by Chen and Wang~\cite{ChenW11}.

\section{An optimal deterministic algorithm}

We consider an input set of weighted points
$P = \{(x_i,y_i,w_i) \mid 1 \leq i \leq n\}$, and an integer $k > 0$.
Let $\eps^*$ denote the optimal distance from $P$ to a $k$-step function,
that is, 
\[\eps^* = \min\{\d(P,f) \mid f\ \mbox{ is a } k\text{-step\ function}\}.\] 

Karras et al.~\cite{KarrasMS07} made the following observation:
\begin{lemma}\label{lem:decision}
Given a set of $n$ weighted points sorted with respect to their $x$-coordinate,
an integer $k > 0$ and a real $\eps > 0$,
one can decide in time $O(n)$ if $\eps < \eps^*$.
\end{lemma}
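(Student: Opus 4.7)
The plan is to give a linear-time greedy sweep that computes the minimum number $k(\eps)$ of constant pieces needed for a step function to be within weighted vertical distance $\eps$ of every point, and then to compare this quantity to $k$. The answer to the decision question will be that $\eps<\eps^*$ if and only if $k(\eps)>k$, since $\eps^*$ is by definition the smallest $\eps$ for which $k(\eps)\leq k$.

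The key geometric observation is that a constant value $v$ satisfies $w_i|v-y_i|\leq\eps$ if and only if $v$ lies in the slab $I_i=[y_i-\eps/w_i,\,y_i+\eps/w_i]$. Hence a single step that covers a set $S$ of consecutive (in the $x$-sorted order) input points at distance at most $\eps$ exists if and only if $\bigcap_{i\in S}I_i\neq\emptyset$. From this I obtain $k(\eps)$ by the following sweep: process the points in order of increasing $x$-coordinate; maintain the running intersection $J$ of the slabs of the points already assigned to the current step (which stays an interval, hence is stored as a lower and an upper bound); when the next point's slab $I_i$ still meets $J$, replace $J$ by $J\cap I_i$ in $O(1)$ time; otherwise close the current step, open a new one with $J:=I_i$, and increment the step counter. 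Output the final step counter as $k(\eps)$.

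To justify correctness I would prove the exchange property: in any feasible step function of distance $\leq\eps$, the leftmost step covers some prefix $p_1,\dots,p_j$ of the points, and the greedy choice takes the largest such $j$; replacing the first step of any optimal solution by the greedy one still yields a feasible solution with no more steps, after which induction on the remaining suffix gives $k(\eps)\leq$ the number of steps in any feasible function, so $k(\eps)$ is indeed the minimum. The complexity analysis is immediate: each of the $n$ points triggers $O(1)$ work (one intersection test and one interval update), the input is already sorted by hypothesis, and the final comparison $k(\eps)>k$ is $O(1)$.

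The step I expect to require the most care to state cleanly is the exchange argument, because I must make sure that greedily extending the current step can never force the suffix to need strictly more steps than any alternative feasible partition; everything else (the slab reformulation and the $O(1)$ update of the running intersection) is routine.
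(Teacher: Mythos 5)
Your greedy sweep---maintaining the running intersection of the slabs $[y_i-\eps/w_i,\,y_i+\eps/w_i]$ in $x$-sorted order, opening a new step when it empties, and comparing the resulting count $k(\eps)$ with $k$---is exactly the greedy decision procedure of Karras et al.\ that the paper invokes (and only sketches) for this lemma, and your exchange argument supplies the standard correctness proof. The one detail worth adding is the degenerate case of several points sharing an $x$-coordinate, which must be forced into the same step (process them as one group); with that caveat your proposal is the paper's approach with the details filled in.
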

The above lemma is obtained by a greedy method, going through
the points from left to right and creating a new step
whenever necessary. More than $k$ steps are created along this process
if and only if $\eps < \eps^*$.
A consequence is that once $\eps^*$ is known, an optimal $k$-step
function can be built in linear time by running this algorithm
on $\eps=\eps^*$.

A second observation, made by Guha and Shim~\cite{GuhaS07}, is the
following.
The distance of a point $p=(x_i,y_i,w_i)$ to the constant function $c$
is equal to $\d(p,c)=w_i \cdot |c-y_i|$.
Hence, for a (non empty) subset $Q \subseteq P$ of the input points, 
the distance $\min \{\d(Q,f)\mid \mbox{$f$ is a constant function}\}$ 
between $Q$ and the closest constant function is given by the 
minimum $y$-coordinate of the
points in the region $U_Q$ defined as:
\[U_Q = \bigcap_{(x_i,y_i,w_i) \in Q} \{ (x,y) \mid y \geq w_i \cdot |x-y_i|\}.\]
In  other words, the distance between $Q$ and the closest 1-step function 
is the $y$-coordinate of the lowest vertex in the upper
envelope $U_Q$ of the lines with equation $y=\pm w_i(x-y_i)$ corresponding
to the points $(x_i,y_i,w_i) \in Q$.
(There is only one lowest vertex as the slopes $\pm w_i$ are nonzero.)

An immediate consequence 
is the following. For $i \in \{1,\ldots,n\}$, let $\ell_{2i-1}$ 
be the line defined by the equation $y = w_i(x-y_i)$, and
$\ell_{2i}$ the line defined by $y = -w_i(x-y_i)$.
Let $L = \{\ell_1,\ldots,\ell_{2n}\}$.
We denote by $\A (L)$ the arrangement of these lines. (See Figure~\ref{fig:envelope}.)
\begin{lemma}\label{lem:intersection}
The optimal distance $\eps^*$ from a set of weighted points
$P$ to a $k$-step function is the $y$-coordinate of a vertex of $\A (L)$.
\end{lemma}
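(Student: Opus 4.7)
The plan is to reduce Lemma~\ref{lem:intersection} to the single-step observation recalled immediately above it. I would fix an optimal $k$-step function $f^*$ with breakpoints $a_1<\dots<a_{k-1}$, and let $Q_1,\dots,Q_k$ be the partition of $P$ induced by these breakpoints (grouping the input points whose $x$-coordinates lie in each of the intervals $(-\infty,a_1)$, $[a_j,a_{j+1})$, $[a_{k-1},+\infty)$). On each interval $f^*$ takes a constant value $c_j$, so
\[ \eps^* = \d(P,f^*) = \max_{1\leq j\leq k}\d(Q_j,c_j). \]

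The next step is to argue that $\eps^*$ is witnessed by at least one group whose constant is already chosen optimally. For each $j$, let $c_j^*$ minimize $\d(Q_j,\cdot)$ over all constants, and define $f'$ to be the $k$-step function with the same breakpoints $a_1,\dots,a_{k-1}$ but with values $c_j^*$. Then $f'$ is a valid $k$-step function and $\d(P,f')=\max_j \d(Q_j,c_j^*)\leq \max_j \d(Q_j,c_j)=\eps^*$. By optimality of $\eps^*$, equality holds, so there exists $j^*$ with $\d(Q_{j^*},c_{j^*}^*)=\eps^*$.

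Now I invoke the second observation from Guha and Shim: the value $\d(Q_{j^*},c_{j^*}^*)$ is the $y$-coordinate of the lowest vertex of $U_{Q_{j^*}}$, which is the intersection of two lines of the form $y=\pm w_i(x-y_i)$ with $(x_i,y_i,w_i)\in Q_{j^*}$ (the two lines have opposite signs since $w_i>0$, so they really do meet in a single point). Since $Q_{j^*}\subseteq P$, both of these lines belong to $L$, and therefore their intersection is a vertex of $\A(L)$ whose $y$-coordinate is exactly $\eps^*$.

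The argument is short and I do not expect a real obstacle: the only thing to check carefully is that swapping each $c_j$ for the best constant $c_j^*$ still yields a $k$-step function, which is immediate because the partition of $P$ induced by the $a_j$'s is preserved. Everything else is a translation between the upper-envelope picture of a single group and the full arrangement $\A(L)$.
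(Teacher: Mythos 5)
Your proof is correct and is precisely the argument the paper leaves implicit when it presents the lemma as ``an immediate consequence'' of the Guha--Shim observation: partition $P$ by the breakpoints of an optimal $k$-step function, replace each step value by the best constant for its group, and observe that the binding group's optimum is the $y$-coordinate of the lowest vertex of its upper envelope, hence of a vertex of $\A(L)$. There is no gap worth noting (only trivial points, such as discarding empty groups, are left unsaid), and no genuinely different route from the paper's.
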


\begin{figure}[ht]
\begin{center}
\scalebox{.5}{
\includegraphics{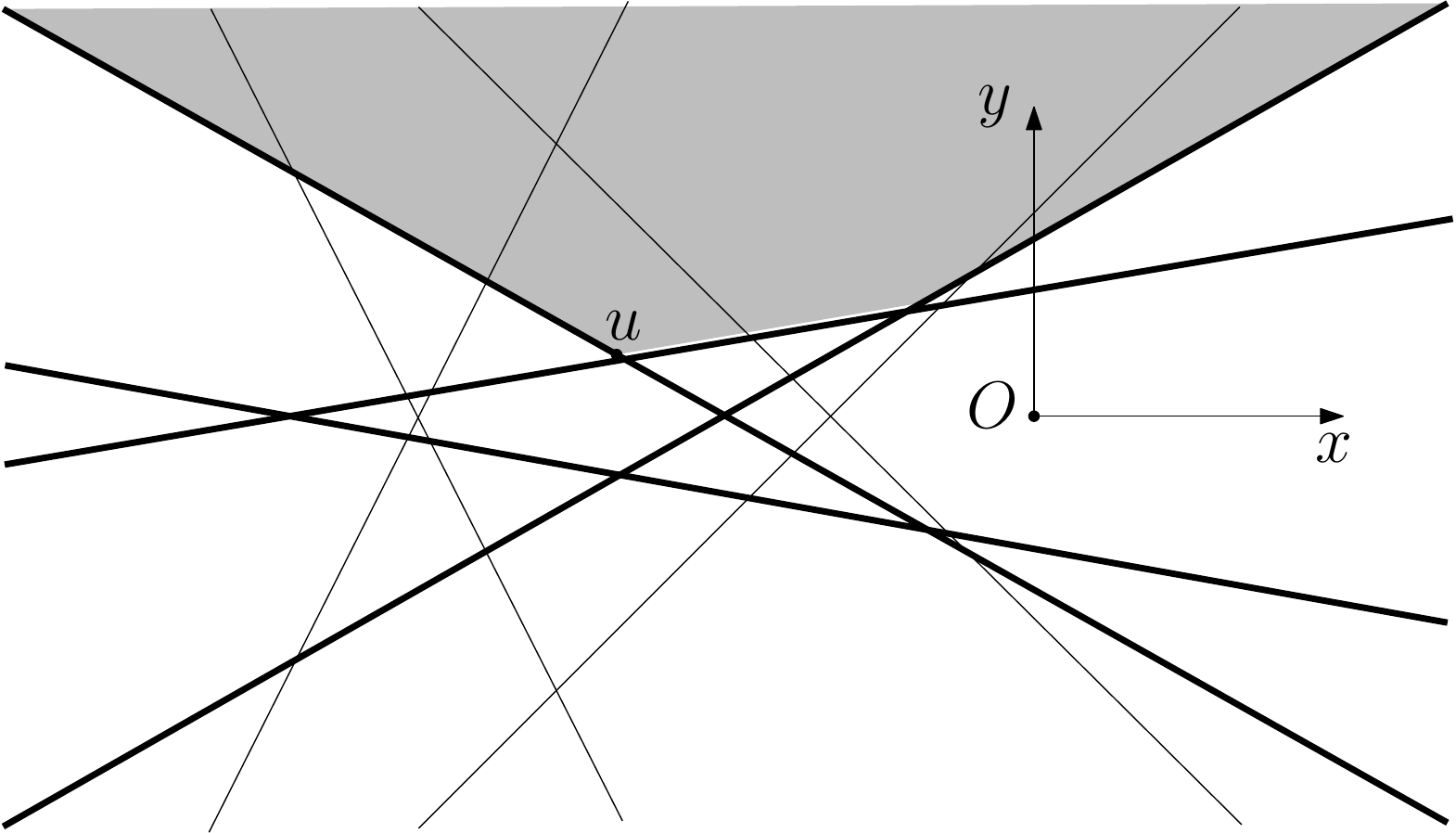}}
\end{center}
\caption{The arrangement $\A(L)$ and the upper envelope (shaded) of a subset $Q \subset L$
(bold).The $y$-coordinate of the lowest point $u$ in $U_Q$ gives the minimum distance
from $Q$ to a 1-step function.\label{fig:envelope}}
\end{figure}

%
%

The deterministic algorithm presented here
will be obtained by performing a search on the vertices
of $\A (L)$, calling the decision procedure of
Lemma~\ref{lem:decision} only $O(\log n)$ times,
and with an overall extra time $O(n \log n)$.
We achieve it by applying Cole's improved
parametric searching technique~\cite{Cole87}:

\begin{theorem}[Cole]\label{th:cole}
Consider the problem of sorting an array
$A[1,\ldots,n]$ of size $n$. Assume the following two conditions hold:
\begin{description}
\item{(i)} There is an $O(n)$ time algorithm to test if $A[i] \leq A[j]$.
\item{(ii)} There exists a linear order $\preceq$ on the set
$\{(i,j) \mid 1 \leq i < j \leq n\}$ such that
\[(i,j) \preceq (i',j')\ \Rightarrow\ \left( A[i] \leq A[j] 
\Rightarrow  A[i'] \leq A[j'] \right)\]
and such that we can decide 
if $(i,j) \preceq (i',j')$ in $O(1)$ time.
\end{description}
Then, the array $A$ can be sorted in $O(n \log n)$ time.
\end{theorem}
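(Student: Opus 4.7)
The plan is to emulate a parallel sorting algorithm as the scaffold of a parametric search, exploiting condition~(ii) to resolve the expensive $O(n)$-time comparisons of~(i) in large batches. I fix a parallel sorting algorithm (say AKS or binary merge sort) of depth $O(\log n)$ in which each round performs $O(n)$ comparisons, for a total of $O(n\log n)$ comparisons, and run it generically on the symbolic values $A[1],\dots,A[n]$. At each point in the simulation there is a set of \emph{active} comparisons, namely those whose two inputs have already been resolved but whose output is not yet known.

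The key observation is that, by~(ii), the Boolean map sending a pair $(i,j)$ to the truth value of $A[i]\leq A[j]$ is monotone along the linear order~$\preceq$. Hence, if $S$ is any set of active pairs sorted by~$\preceq$, the answers form a monotone Boolean sequence with a single threshold. A binary search on $S$ using the $O(n)$-time oracle of~(i) as the predicate therefore resolves every query in~$S$ with $O(\log|S|)$ oracle calls; because $\preceq$ admits $O(1)$-time comparisons, sorting and probing~$S$ cost only $O(|S|\log|S|)$ additional time. Processing each of the $O(\log n)$ rounds of the network in this naive way already yields an $O(n\log^2 n)$ algorithm, which is not yet good enough.

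To save the final logarithmic factor and reach $O(n\log n)$, I would apply Cole's weighted-median refinement across all rounds simultaneously. Maintain a weight on every active comparison; at each step, locate the weighted $\preceq$-median of the active set, make a single oracle call to the $O(n)$-time predicate, and then update weights, promote the comparisons that have just become decided to the next round, and insert their newly-enabled successors as fresh active comparisons. A potential-function argument shows that each oracle call kills a constant fraction of the total weight in the network, so only $O(\log n)$ oracle calls are made overall, contributing $O(n\log n)$ to the running time. The remaining bookkeeping---propagating outputs through the network and maintaining a balanced search tree of active pairs keyed by~$\preceq$---adds $O(n\log n)$.

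The main obstacle I expect is the amortized analysis of this weighted-median scheme together with the design of a data structure that supports insertions, deletions, and weighted-median queries on a set of pairs ordered by~$\preceq$ within the global $O(n\log n)$ budget. Condition~(ii), namely $O(1)$-time $\preceq$-comparisons, is precisely what makes those data-structural operations cheap enough to fit inside the budget; without it, merely comparing two candidate pairs would be as expensive as invoking the oracle of~(i), and the whole speed-up would collapse.
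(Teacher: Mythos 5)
Your proposal is essentially the paper's own argument: it is precisely Cole's technique of simulating an $O(\log n)$-depth sorting network (e.g.\ AKS), using the monotonicity of the comparison outcomes along $\preceq$ so that a single $O(n)$-time oracle call at the weighted median of the active comparisons resolves a weighted half of them, with level-dependent weights (the paper takes $1/4^p$ at depth $p$) forcing termination after $O(\log n)$ oracle calls and $O(n\log n)$ total bookkeeping. The paper's proof is itself only a sketch deferring the weight and termination analysis to Cole, so your plan matches it in both structure and level of detail (the only cosmetic difference being that the paper finds each weighted median by linear-time selection over the $O(n)$ active comparisons rather than via a balanced search tree).
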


We briefly explain Cole's method.
Recall that a sorting network~\cite{Knuth1998sorting} 
for $n$ elements is a sequence $L_1,\ldots,L_d$,
each $L_i$ being a set of comparisons on disjoint inputs in $\{1,\ldots,n\}$.
On an input array $A[1,\ldots,n]$, the network operates as follows:
for each level $p$ from $1$ to $d$, the comparisons in $L_p$ are performed in
parallel, and the two elements of $A$ corresponding to each comparison 
are swapped if they appear in the wrong order.
If the sorting network is correct, the elements of $A$ are output in
sorted order after the last level.

The algorithm from Theorem~\ref{th:cole} works as follows.
First build a sorting network of depth $O(\log n)$ in deterministic
$O(n \log n)$ time~\cite{AjtaiKS83,Paterson90}.
During the course of the sorting algorithm,
each comparison in the network is marked with one of the
following labels: \emph{resolved}, \emph{active} or \emph{inactive}.
In the beginning, the comparisons at the first level $L_1$ of the network
are marked active, while all others are inactive.
The weight $1/4^p$ is assigned to each active node at level $p$.
Repeat the following until all nodes are resolved:
\begin{description}
\item{--} Compute the weighted median $(i_m,j_m)$ of all active comparisons
with respect to the order $\preceq$ defined in (ii);
\item{--} Decide if $A[i_m] \leq A[j_m]$ with the algorithm from (i).
This solves a weighted half of the active comparisons.
Swap the corresponding element of $A$ when in the wrong order, and
mark these comparisons as \emph{resolved}. Mark all inactive comparisons
having their two inputs resolved as \emph{active}.
\end{description}
It can be proved that at most $O(n)$ nodes are active at any step. Since the
weighted median can be computed in linear time, each step is performed in
$O(n)$ time. 
Moreover, it can be showed that the algorithm terminates in at most 
$O(\log n)$ steps.
So overall, this procedure sorts an array of size $n$ in $O(n \log n)$ time.

We are now ready to give our algorithm for fitting a step function
to a weighted point set:
\begin{theorem}\label{th:main}
Given a set $P$ of $n$ weighted points in the plane and
an integer $k>0$, a $k$-step function $f$ minimizing $\d(P,f)$
can be computed in $O(n \log n)$ deterministic time.
\end{theorem}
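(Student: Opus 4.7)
The plan is to apply Cole's parametric searching (Theorem~\ref{th:cole}) to compute $\eps^*$ in $O(n\log n)$ time while invoking the decision procedure of Lemma~\ref{lem:decision} only $O(\log n)$ times; once $\eps^*$ is available, one more invocation of Lemma~\ref{lem:decision} at $\eps=\eps^*$ yields the optimal $k$-step function in $O(n)$. The task thus reduces to locating $\eps^*$, which by Lemma~\ref{lem:intersection} lies among the $O(n^2)$ values $\eps_{ij}$, where $\eps_{ij}$ denotes the $y$-coordinate of the intersection of $\ell_i$ with $\ell_j$ and can be obtained from the two line equations in $O(1)$ time. I would start with an $O(n\log n)$ preprocessing step: sort the input points by $x$-coordinate (as required by Lemma~\ref{lem:decision}) and the $2n$ lines of $L$ by slope.

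To apply Theorem~\ref{th:cole} I would take an array $A$ of length $O(n)$ whose entries are affine functions of the unknown $\eps^*$. A natural choice is to let $A[i]$ be the $x$-coordinate at which the slope-sorted line $\ell_i:y=s_ix+b_i$ meets the horizontal level $y=\eps^*$, so $A[i]=(\eps^*-b_i)/s_i$. A direct computation gives $A[i]-A[j]=(s_j-s_i)(\eps^*-\eps_{ij})/(s_is_j)$, so the comparison $A[i]\leq A[j]$ is equivalent to $\eps^*\leq\eps_{ij}$ up to a sign determined by the slope signs. This yields condition~(i): compute $\eps_{ij}$ in $O(1)$ and run Lemma~\ref{lem:decision} at $\eps=\eps_{ij}$ in $O(n)$ to decide the comparison. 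For condition~(ii) I would order pairs by $(i,j)\preceq(i',j')\iff\eps_{ij}\leq\eps_{i'j'}$, which is $O(1)$-decidable; under a uniform orientation this is exactly the required monotonicity, because $\eps^*$ on one side of $\eps_{ij}$ forces the same relationship with every larger $\eps_{i'j'}$. Theorem~\ref{th:cole} then sorts $A$ in $O(n\log n)$ time with $O(\log n)$ decision-procedure calls, and the returned answers pin $\eps^*$ down as a specific $\eps_{ij}$.

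The main technical obstacle I foresee is precisely the uniform-orientation hypothesis used above: since the $2n$ lines of $L$ carry slopes of both signs, the sign of $(s_j-s_i)/(s_is_j)$ changes at the boundary between positive- and negative-slope lines, and the required monotonicity fails for a single monolithic array $A$. I plan to handle this by partitioning $L$ into its positive-slope part $L^+$ and negative-slope part $L^-$ and running Cole's procedure separately on each half, where $s_is_j$ has constant sign so the orientation is uniform. The within-half sorts identify $\eps^*$'s position relative to every same-sign $\eps_{ij}$; a final round of $O(\log n)$ decision-procedure calls then handles the cross-type vertices of $\A(L)$ (those formed by one line from $L^+$ and one from $L^-$), for example by combining the two sorted outputs into a merge-like search structure. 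Arranging this combination so that both the total decision count stays $O(\log n)$ and the total overhead stays $O(n\log n)$ is where the bookkeeping has to be carried out carefully.
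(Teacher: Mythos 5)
Your setup (the array $A[i]=\ell_i(\eps^*)$, comparisons resolved by one call to Lemma~\ref{lem:decision} at the critical value $\eps_{ij}$, pairs ordered by $\eps_{ij}$) is exactly the paper's, but the obstacle you then build your plan around is not a real one, and the detour you take to avoid it leaves the essential case unsolved. Your sign computation shows the orientation of the comparison is governed by $(s_j-s_i)/(s_is_j)=1/s_i-1/s_j$; so the orientation becomes uniform over \emph{all} pairs, mixed signs included, as soon as you index the lines by decreasing $1/s_i$ --- equivalently, by their left-to-right order as $y\to-\infty$, which is what the paper does. With that indexing, $i<j$ implies $\ell_i(y)\leq\ell_j(y)$ exactly for $y\leq\tilde\pi(\ell_i,\ell_j)$, where $\tilde\pi(\ell_i,\ell_j)=\sup\{y\mid \ell_i(y)<\ell_j(y)\}$ (set to $+\infty$ for parallel lines, a case your proposal does not treat), and both conditions (i) and (ii) of Theorem~\ref{th:cole} hold for a single monolithic array. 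Your alternative --- splitting into $L^+$ and $L^-$, sorting each half, and then handling ``cross-type'' vertices in a final round --- is not a harmless complication: for this problem $\eps^*$ is \emph{always} the $y$-coordinate of a cross-type vertex, since the lowest vertex of the upper envelope of a group's lines is where the envelope switches from a negative-slope line to a positive-slope line. So the step you leave as ``merge-like search structure'' with careful bookkeeping is the entire problem; a straightforward merge of the two sorted halves at level $\eps^*$ costs $\Theta(n)$ cross comparisons, i.e.\ $\Theta(n)$ calls to the $O(n)$ decision procedure, which destroys the $O(n\log n)$ bound, and no mechanism is given that keeps the number of calls at $O(\log n)$.

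A second gap is the claim that after Cole's sort ``the returned answers pin $\eps^*$ down as a specific $\eps_{ij}$.'' The $O(\log n)$ decision calls made during the sort only localize $\eps^*$ in an interval between tested critical values; the pair of lines whose intersection realizes $\eps^*$ need never be among the few comparisons actually resolved by a decision call. The missing argument is the one the paper uses: by Lemma~\ref{lem:intersection}, $\eps^*$ is the height of a vertex, and two lines with equal value at level $\eps^*$ must contain a pair that is \emph{adjacent} in the computed order $\sigma$, so $\eps^*\in\{\pi(\ell_{\sigma(i)}\cap\ell_{\sigma(i+1)})\mid 0<i<m\}$; this gives at most $2n-1$ explicit candidates, which you then sort and binary-search with $O(\log n)$ further calls to Lemma~\ref{lem:decision}, and a last run of the decision algorithm at $\eps^*$ outputs the optimal $k$-step function. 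With the re-indexing at $-\infty$ replacing your split, and this final candidate-and-binary-search step made explicit (plus discarding duplicate lines), your argument becomes the paper's proof.
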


\begin{proof}
Let $\theta : \R \rightarrow \{0,1\}$ be the mapping defined by $\theta(\eps)=0$ if
$\eps < \eps^*$ and $\theta(\eps)=1$ otherwise.
First we sort the points of $P$ with respect to their $x$-coordinate.
Given $\eps$, this allows to compute $\theta(\eps)$ in time $O(n)$,
using the decision algorithm of Lemma~\ref{lem:decision}.

We denote by $\pi : \R^2 \rightarrow \R$ the projection onto $y$-coordinate axis.
Recall the definition of the lines $L = \{\ell_1,\ldots,\ell_{2n}\}$.
For $y \in \R$, let $\ell_i(y)$ be the unique $x \in \R$ such that $(x,y) \in \ell_i$;
it is well-defined since no line is parallel to the $x$-axis.
By Lemma~\ref{lem:intersection}, it holds that
\[\eps^* = \min \{ \pi(v) \mid v\ \text{vertex of}\ \A(L),\ \theta(\pi(v))=1\}.\]
Although $\eps^*$ is not known, we shall sort the set $\{ \ell_i(\eps^*) \mid 1 \leq i \leq 2n\}$
using Cole's method.

Note that $L$ could be of cardinality smaller than $2n$ if some lines are identical.
We discard identical lines and order them with respect to their order at $-\infty$.
That is, $L = \{\ell_1,\ldots,\ell_m\}$ and for all $0 < i < m$, it holds that
$\ell_i(y) < \ell_{i+1}(y)$ when $y \rightarrow -\infty$.
Let us check that conditions (i) and (ii) of Theorem~\ref{th:cole} hold.

Condition (i): Given $i < j$, we want to decide if $\ell_i(\eps^*) \leq \ell_j(\eps^*)$.
If lines $\ell_i$ and $\ell_j$ are parallel, the ordering on the lines ensures that
$\ell_i(y) < \ell_j(y)$ for all $y$ and in particular for $\eps^*$.
Otherwise, we compute $y_0=\pi(\ell_i \cap \ell_j)$ in time $O(1)$,
then decide if $y_0 < \eps^*$ in time $O(n)$ by Lemma~\ref{lem:decision}.
If $y_0 < \eps^*$, then $\ell_i(\eps^*) > \ell_j(\eps^*)$;
otherwise $\ell_i(\eps^*) \leq \ell_i(\eps^*)$.

Condition (ii): For $i<j$, let us define $\tilde\pi (\ell_i,\ell_j) = \pi(\ell_i \cap \ell_j)$
if lines $\ell_i$ and $\ell_j$ intersect, and $\tilde\pi (\ell_i,\ell_j) = +\infty$ otherwise.
(Or equivalently $\tilde\pi(\ell_i,\ell_j) = \sup \{y \in \R \mid \ell_i(y)<\ell_j(y)\}$.)
Let $\preceq$ be the order on the set $\{(i,j) \mid 1 \leq i < j \leq m\}$
defined by:
\[(i,j) \preceq (i',j')\ \text{if and only if}\ \tilde\pi(\ell_{i},\ell_{j}) \leq \tilde\pi(\ell_{i'},\ell_{j'}).\]
Assume $(i,j) \preceq (i',j')$ and $\ell_i(\eps^*) \leq \ell_j(\eps^*)$.  (See figure~\ref{fig:order}.)
\begin{figure}[ht]
\begin{center}
\scalebox{.5}{
\includegraphics{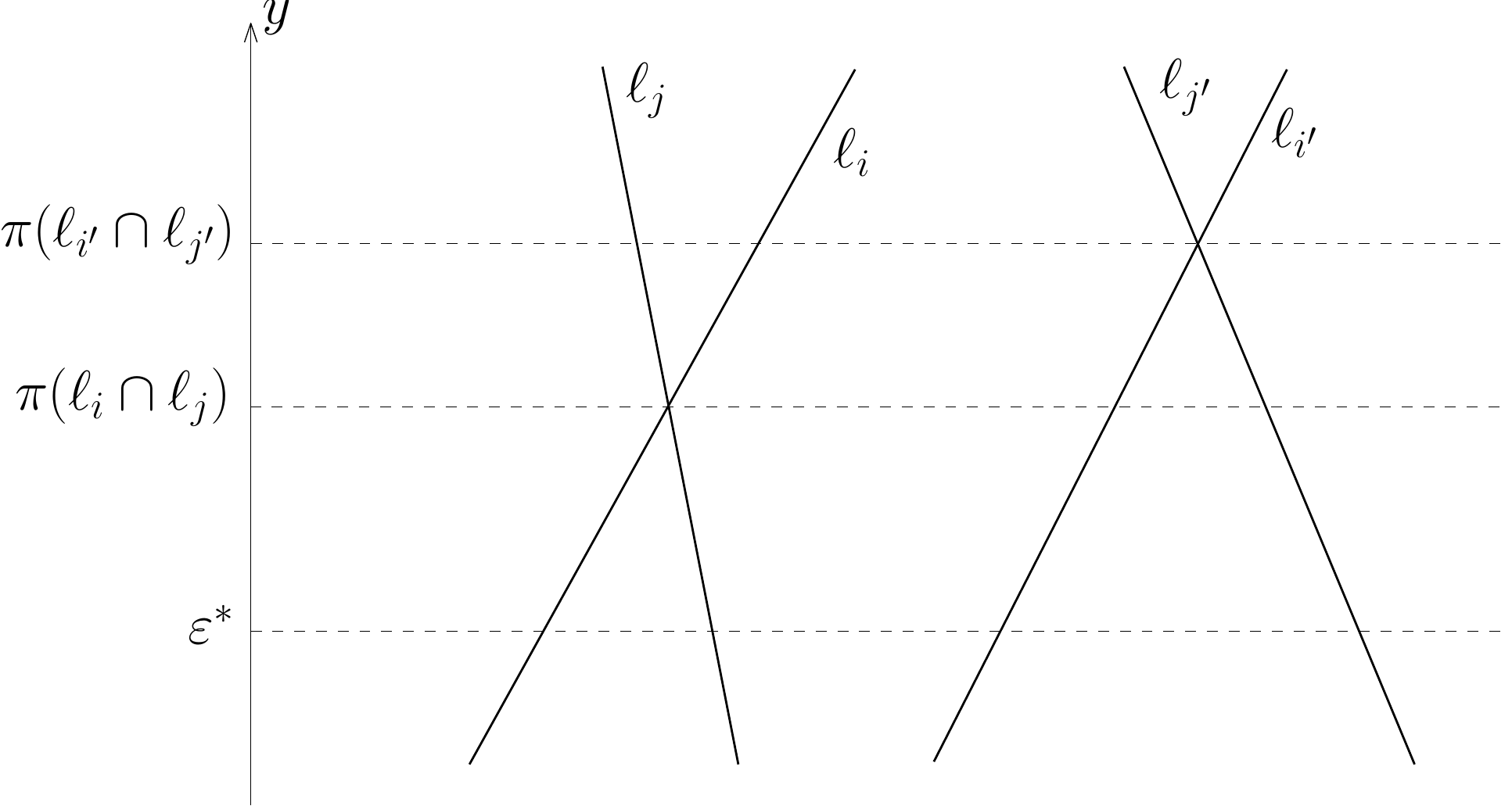}}
\end{center}
\caption{A case where $(i,j) \preceq (i',j')$ and $\ell_i(\eps^*) \leq \ell_j(\eps^*)$.\label{fig:order}}
\end{figure}
From the second condition, it holds that $\eps^* \leq \tilde\pi(\ell_i,\ell_j)$,
then the first condition yields $\eps^* \leq \tilde\pi(\ell_{i'},\ell_{j'})$;
it follows that $\ell_{i'}(\eps^*) \leq \ell_{j'}(\eps^*)$.
Moreover, the order $\preceq$ can obviously be computed in $O(1)$ time.

Hence Theorem~\ref{th:cole} allows to compute a permutation
$\sigma$ of $\{1,\ldots,m\}$ such that 
\[\ell_{\sigma(1)}(\eps^*) \leq \ell_{\sigma(2)}(\eps^*) \leq \ldots \leq \ell_{\sigma(m)}(\eps^*)\]
in time $O(n \log n)$.
By Lemma~\ref{lem:intersection}, it holds that
$\eps^* \in \{ \pi(\ell_{\sigma(i)} \cap \ell_{\sigma(i+1)}) \mid 0 < i < m\}$.
After sorting this set, we perform a binary search using the linear-time decision 
algorithm, and thus we compute $\eps^*$ in $O(n \log n)$ time.
At last, we run the decision algorithm on $\eps^*$, which 
gives an optimal $k$-step function for $P$.
\end{proof}

\section{Concluding remarks}

When the input points are given in unsorted order,
our algorithm is optimal by reduction from sorting~\cite{FournierV11}.
So an intriguing question is whether there exists an $o(n \log n)$-time
algorithm when the input points are given in sorted order.
For instance, in the unweighted case, a linear-time algorithm
exists if the points are sorted according to their $x$-coordinates~\cite{FournierV11}.

Our algorithm is mainly of theoretical interest,
as Cole's parametric searching technique relies on a sorting network
with $O(\log n)$ depth; all known constructions for such networks 
involve large constants.
So it would be interesting to have a practical
$O(n \log n)$-time deterministic algorithm.


{\small
\bibliography{wstep}
}

\end{document}